\definecolor{light-gray}{gray}{0.95}
\newcommand{\code}[1]{\colorbox{light-gray}{\texttt{#1}}}
\newcommand{\ynot}{\overline{y}}
\newcommand{\alphanot}{\overline{\alpha}}
\newcommand{\mlark}{Masked LARk}
\DeclareMathOperator*{\argmin}{arg\,min}
\newtheorem{theorem}{Theorem}[section]
\newtheorem{corollary}{Corollary}[theorem]
\newtheorem{lemma}[theorem]{Lemma}
\newtheorem{definition}[theorem]{Definition}
\begin{document}

\twocolumn[
\title{Masked LARk: \\
       Masked Learning, Aggregation and Reporting worKflow}
\author{Joseph J. Pfeiffer III, Microsoft, joelpf@microsoft.com \\ 
        Denis Charles, Microsoft, cdx@microsoft.com \\
        Davis Gilton, Microsoft, davisgilton@microsoft.com \\
        Young Hun Jung, Microsoft, youjung@microsoft.com \\
        Mehul Parsana, Microsoft, mparsana@microsoft.com \\
        Erik Anderson, Microsoft, erikan@microsoft.com \\}
\date{\vspace{-0.2in}}
\maketitle

\begin{abstract}
Today, many web advertising data flows involve passive cross-site tracking of users. Enabling such a mechanism through the usage of third party tracking cookies (3PC) exposes sensitive user data to a large number of parties, with little oversight on how that data can be used.  Thus, most browsers are moving towards removal of 3PC in subsequent browser iterations.  In order to substantially improve end-user privacy while allowing sites to continue to sustain their business through ad funding, new privacy-preserving primitives need to be introduced.

In this paper, we discuss a new proposal, called \textit{\mlark}, for aggregation of user engagement measurement and model training that prevents cross-site tracking, while remaining (a) flexible, for engineering development and maintenance, (b) secure, in the sense that cross-site tracking and tracing are blocked and (c) open for continued model development and training, allowing advertisers to serve relevant ads to interested users.  We introduce a secure \textit{multi-party compute} (MPC) protocol that utilizes "helper" parties to train models, so that once data leaves the browser, no downstream system can individually construct a complete picture of the user activity.  For training, our key innovation is through the usage of masking, or the obfuscation of the true labels, while still allowing a gradient to be accurately computed in aggregate over a batch of data.  Our protocol only utilizes light cryptography, at such a level that an interested yet inexperienced reader can understand the core algorithm.  We develop helper endpoints that implement this system, and give example usage of training in PyTorch.

\textbf{Keywords:} Privacy, Machine Learning, Multi-Party Compute
\end{abstract}

\vspace{0.4in}
]
\clearpage

\section{Introduction}

Today, many web advertising data flows involve passive cross-site tracking of users through the use of third party cookies (3PC).  Such mechanisms enable many valid use cases, such as reporting advertiser {\em conversion} metrics, which can be used to identify which campaigns are more or less likely to lead to desired behavior by a user. Examples of conversions include landing on a page of interest, putting an item into a shopping cart, or making a purchase.  More advanced techniques deploy machine learning models to identify the likelihood of a conversion, and adjust an advertiser's corresponding bid in real-time.

However, a strong drawback to 3PC is that they expose sensitive user data to a large number of parties, with little oversight on how that data can be used.  Thus, most browsers are moving towards removal of 3PC in subsequent browser iterations \cite{apple2020threepc,firefox2019threepc,chrome2021threepc}.  In order to substantially improve end-user privacy while allowing sites to continue to sustain their business through ad funding, new privacy-preserving primitives need to be introduced.  Several proposals \cite{google2020aggregatereporting,corrigangibbs2017prio} have been introduced, primarily relying on the notion of a {\em trusted} third party charged with aggregating and reporting metrics back to the ad server.  This brings up questions of how this third party can be trusted not to collude with various ad servers. Some ideas exist, ranging from code audits to secure enclaves \cite{anciaux2019enclaves}, but most have either large overhead or expense.

In contrast, a recent proposal from Google goes in a different direction: utilizing secure multi-party compute (MPC) to build a trusted third party comprised of multiple semi-trusted helpers \cite{google2020aggregatereporting}.  Each helper receives a piece of information necessary to complete a larger picture, but without the complementary pieces held by other helpers the data is effectively random. The primary assumption in this world is the lack of {\em collusion} -- so long as the helpers do not work together or with the advertising server (breaking the protocol), no single party can recover the original sample of information.  Despite this restriction, the helpers can follow assigned protocols to create useful {\em aggregate} values.  

An illustration of the system is shown in Figure \ref{fig:MPC}.  The data begins with a (large) collection of user data being held solely on the user's device, the browser.  The browser creates {\em secret shares} of the data intended for multiple helpers (e.g., $H0$ and $H1$).  These shares are constructed in such a way that an individual helper receives only a piece of the final value, which alone is effectively random.  The shares, or reports, are then encrypted with public keys provided by the helpers (keys are unique to the helper) and passed on to the advertiser server.  The advertising server effectively acts as a database -- there's very little available information that it can read, and it must rely on the helpers to construct meaningful aggregate information about the users.  How the secret shares are constructed, and what is hidden, is central to the overarching design of the system.  A popular approach is to use protocols which involve hiding both context about the conversion and the conversion value (e.g., using {\em distributed point functions} \cite{google2020aggregatereporting}).  

\begin{figure}[t]
\begin{center}
\centerline{\includegraphics[width=\columnwidth]{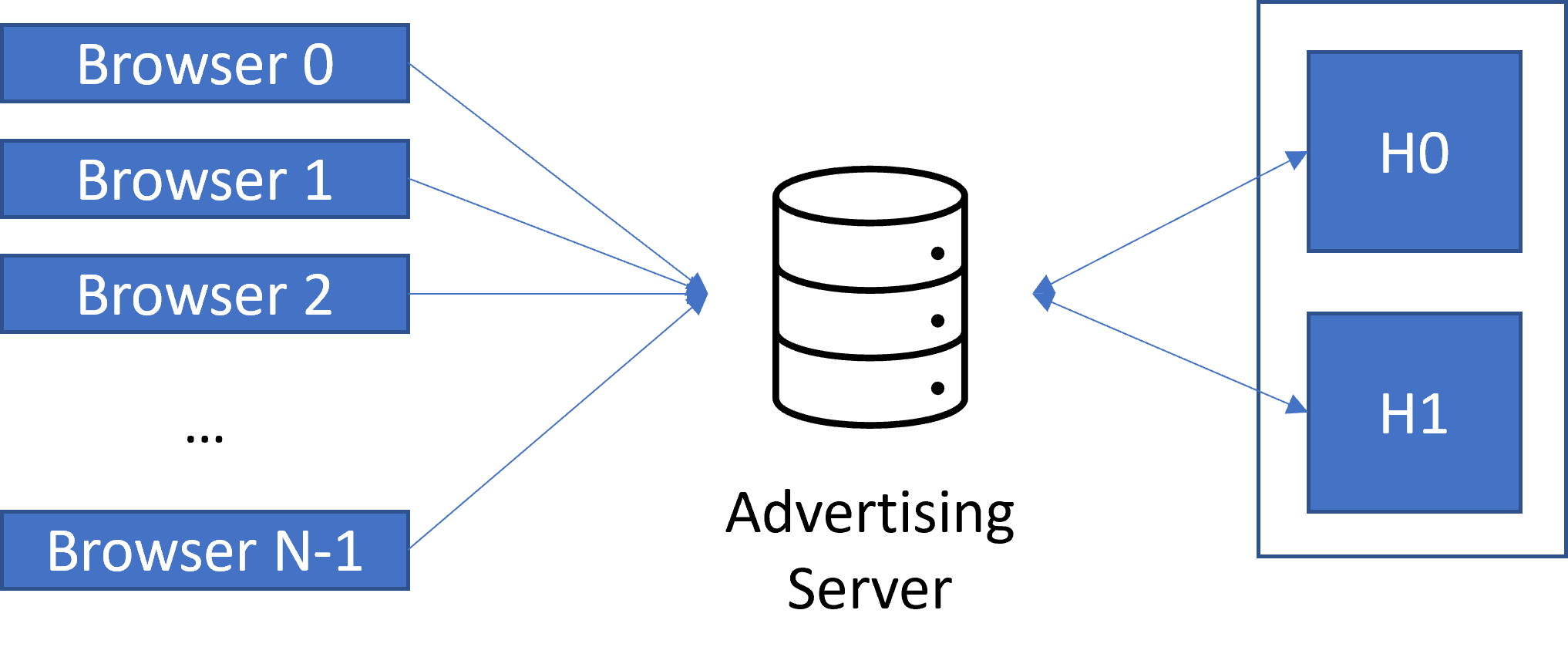}}
\caption{System overview: browsers track feature vectors and send {\em encrypted} reports to an advertising server.  The advertising server sends the reports to the respective helpers, who decrypt and perform local aggregation.  Noisy local aggregate reports are returned to the advertising server, who combines them for a complete (noisy) value.}
\label{fig:MPC}
\end{center}
\end{figure}

Here, we propose Masked LARk, a Masked Learning, Aggregation and Reporting worKflow.

\begin{itemize}
    \item We extend the existing proposal \cite{google2020aggregatereporting} to a variety of tasks outside of simple aggregation reporting.  We envision Masked LARk as a privacy preserving Map-Reduce platform, where browsers map user information into secret shared representations, and secure MPC helpers perform a private reduce operation.
    \item We discuss how differentially private \cite{dwork2014algorithmic} aggregated statistics can be computed between browser and helpers and sent to the advertising server.  
    \item We outline the usage of Masking, and show how it generalizes to a wide variety of tasks.  Notably, this allows us to compute gradients for differentiable models, and train conversion models using secure MPC.
    \item In Section \ref{subsubsec:localdp}, we show if we utilize local differential privacy on our feature vectors, the data sent to the helpers is differentially private, and by extension the learning algorithm is also differentially private.
    \item We give several experiments on benchmark datasets, and plan to release a Python package with example usage of Masked LARk for model training.
\end{itemize}



The remainder of the paper is structured as follows: we start in Section \ref{sec:relatedwork} by discussing related work to this domain, including Google's proposal, an overview of MPC, and Federated Learning.  Next, we give an overview of the Masked LARk proposal in Section \ref{sec:masked_lark}, discussing some types of functions that Masked LARk can compute along with their privacy requirements.  In Section \ref{sec:gradient_computation}, we give an in-depth discussion of \emph{masking} and its usefulness for computation over bi-linear functions. This includes model training algorithms that require gradient computation.  In Section \ref{sec:experiments}, we show that we can indeed train models using MPC helpers, with ablation studies on the impact of privacy settings and local vs. global differential privacy.  Section \ref{sec:discussion} concludes the paper.  In addition, we currently have an externally-available software package for Masked LARk, available at\\ \url{https://github.com/microsoft/maskedlark}.

\section{Related Work}\label{sec:relatedwork}

Masked LARk sits within a community of proposals related to improving privacy for web advertising.  Notable related projects include FLoC \cite{google2021floc}, FLEDGE \cite{google2021fledge}, PARAKEET \cite{microsft2021parakeet} MACAW \cite{microsft2021macaw} or the Conversion Measurement proposal \cite{google2020aggregatereporting}, amongst others.  Of these, each set of proposals has a different functionality: FLoC was proposed as a way to utilize federated learning to create user clusters for privacy, which can then be utilized by advertisers at ad serving time. FLEDGE / PARAKEET / MACAW are proposals for running privacy-preserving auctions, where user signals are masked from advertising servers entirely while an auction mechanism determines which ads to show on a page.  

In contrast, both Masked LARk and the Google's proposal \cite{google2020aggregatereporting} work on data \emph{after} the impression, click or view has occurred.  Masked LARk largely builds onto the aggregated conversion measurement first brought forward by Google \cite{google2020aggregatereporting,corrigangibbs2017prio}.  The overarching structure between the two is the same; the browsers create encrypted reports intended for an MPC service, utilizing the advertising service itself as an intermediate data storage.  Masked LARk differs from this proposal in two notable ways.  First, Masked LARk has browsers insert random noise into the data samples sent to the advertising server, obfuscating real versus fake features when consumed by the helpers.  This is used to allow simpler downstream crypto protocols to mask only the conversion values themselves, even preventing the helper services from making connections from a feature vector to a label.  In contrast, the original proposal \cite{google2020aggregatereporting} relies on a hierarchical representation of bits requiring interactions between the helpers known as distributed point functions (DPF) \cite{boneh2020dpf}, which explicitly hides the features from the helpers as well.  Second, Masked LARk envisions a more general platform for computing many types of functions, rather than solely aggregating measurement.  A good example of such functions is the gradient computation function for training machine learning models.  Additional functions that can be computed through the Masked LARk are the DPF representations or other available protocols for secure multi-party aggregation \cite{msr2021bucketization}.  What functions to implement is a community decision, discussed among many relevant parties. 

The multi-party computation involves a rich history within cryptography, with many primitives available for use.  An accessible introduction to many of these primitives can be found in the book \cite{evans2018mpc}.  The protocols discussed within this work center around either additive or multiplicative sharing (masking), and their novelty lies in their application within the overall workflow and its usage for model training.  More complex model inference \cite{rathee2020cryptflow2} and model training \cite{wagh2019securenn} exist, with much stronger requirements on feature and model privacy.  Conversely, Masked LARk directly obfuscates the \emph{label}, relying on random noise to further obfuscate the identifying information.  This feature results in protocols with no overhead between the helpers, and much simpler and more accessible implementation.

An additional related field to our domain is \emph{federated learning} \cite{kairouz2019fl}, which is extremely decentralized training of machine learning models.  Many aspects of federated learning align well with our proposed framework; for example, when gradient updates are computed per user, these can be additive secret shared for later consumption by helpers for summation. Hence, the involved cryptography becomes exceptionally scalable and simple.  However, a primary drawback is complexity on the \emph{user's} device.  We have some unknown number of advertising services that would utilize the platform. These services would push the model updates to individual devices, each of which uses its own computing resources to perform the corresponding updates. Such a system would necessitate moving large amounts of data to and from users' devices at irregular intervals, and would increase the computational burden required to store and perform training updates. Hence, this is an unattractive option.

\section{Masked LARk}
\label{sec:masked_lark}

The Masked Learning, Aggregation and Reporting worKflow (Masked LARk) proposal aims to solve multiple items:

\begin{itemize}
\setlength\itemsep{-.5em}
    \item Various private function computation using secure multi-party computation (MPC)
    \item Flexibility in representation of additional functions as new protocols and needs arise
    \item Simplicity in execution, focusing on decoupling users' private information from the advertiser websites they visit and/or convert on.
\end{itemize}

There are three groups of parties central to this platform


\begin{itemize}
    \item {\em Browsers} represent the individual user browsing and interacting with various pieces of the web.
    \item {\em Advertising (Ad) Servers} run the auction and provide ad impressions to the browser to display to the user.
    \item {\em Helpers} form an MPC platform for privacy-preserving function evaluation and aggregation.
\end{itemize}

The defined workflow has similarities regardless of the protocol and function to be executed.  In effect, the workflow mirrors the heavily utilized MapReduce systems \cite{dean2008mapreduce} that are ubiquitous to today's computation.  Masked LARk envisions two analogous pieces:

\begin{itemize}
    \item {\em Map}: browsers take the private data and encode it into a) a format for later functions to consume and b) encrypted packages intended for MPC
    \item {\em Reduce}: a collection of records is sent to the MPC platform with a function to execute, along with side information and privacy constraints
\end{itemize}

The advertising server, acting as a database, remains unable to read even minimal data throughout, and nothing specific to a user.

\subsection{Browser Actions}
Similar to the related proposals, browsers are responsible for a number of tasks in the workflow, handling across-site attribution, secret sharing the records and sending to the helpers.  Within \mlark, specifically, there are two notable additional responsibilities:

\begin{itemize}
    \item Upon the expiration of a possible attribution, if no attribution has occurred, a default value (e.g., 0) must be sent.  This lets models have negative samples with which to train.  The sample does not need to be sent right after the expiration occurs, and should delay random additional intervals to prevent timing attacks.
    \item Browsers are charged with sending additional fake records to obfuscate true values.
\end{itemize}

Both of these are important for privacy preservation and reasonable model training.  In certain circumstances, browser-injected-noise can guarantee differential privacy, which we discuss in Section \ref{subsubsec:localdp}.

\subsection{Helper's Privacy Preserving Techniques}
In addition to purely doing computation on the secret shares,  helpers have some additional features that protect user privacy: {\em $k$-anonymity} and {\em differential privacy}.

\begin{definition}[$k$-Anonymity \cite{samarati2001protecting}]
A helper releases the aggregates only when there are more than $k$ records. 
\end{definition}

The $k$-anonymity ensures that aggregation is done only when there is sufficiently large data set. This procedure will prevent the case where the aggregated output conveys too much information about individual data points.  

We note that $k$ is measured on a per-record basis -- in theory, all records could be provided from the same user, but the helpers have no way to uniquely identify users (by design).  That is why we pair it with another technique, {\em Differential Privacy} \cite{dwork2014algorithmic}. In a high level, either the browser or the helper adds noise to the true value before sharing it with others so that the data receiving entity is hard to infer about individual records. Readers who are interested in the formal definition of differential privacy can refer to the above book. 

Depending on whether the noise is injected to the input or to the output, there are {\em local} differential privacy and {\em global} differential privacy:

\begin{itemize}
    \item {\em Local Differential Privacy}: browsers are charged with either modifying or inserting fake records, to break apart information that a helper could glean.  In some scenarios (see Section \ref{subsubsec:localdp}), we can enforce local differential privacy in the browser.
    \item {\em Global Differential Privacy}: a helper is charged with inserting noise into the operation for global differential privacy.
\end{itemize}

Note that $k$-anonymity requires $k$ as a parameter and differential privacy has its privacy budget $\epsilon$. These privacy settings are publicly declared per advertising server, and are used by the browsers and the helpers to enforce the privacy mechanism.

\subsection{Example: Summation And Counting}\label{subsec:summations}
A simple example that fits within Masked LARk is summation (or counting, a simpler variant).  Here, the browser is given a ring over some modulo on the integers, i.e., $\mathcal R = \mathbb Z / m \mathbb Z$, and a value to return $y$.  Publicly declared privacy settings include

\begin{itemize}
    \item $k$, the minimum number of records required to perform private computation
    \item $\epsilon$, the privacy budget
    \item $\Delta_y$, the bounds of the value being summed, or the maximum change a single record can have on the final summation. If $y$ is a value from a single user, then $\Delta_y = \Delta_y^{max} - \Delta_y^{min}$, where $\Delta_y^{min} \leq y \leq \Delta_y^{max}$ for all $y$
\end{itemize}

\begin{algorithm}[t]
    \caption{Additive Secret Sharing}\label{alg:secretshare}
    \begin{algorithmic}
        \STATE \textbf{Inputs}: Field $\mathcal R$, value $v$
		\STATE Draw $\psi^0 \sim Unif(\mathcal R)$
		\STATE Compute $\psi^1 = v - \psi^0$
		\STATE Encrypt-And-Send $\psi^0$ to $H0$
		\STATE Encrypt-And-Send $\psi^1$ to $H1$
    \end{algorithmic}
\end{algorithm}

\begin{algorithm}[t]
    \caption{Helper Partial Recovery of Secret Shares}\label{alg:summation_recovery}
    \begin{algorithmic}
        \STATE \textbf{Inputs}: Privacy Parameters $k$ and $\epsilon$, sensitivity of function $\Delta_y$, helper records $R^h$
        \IF {LEN($R^h$) $<$ $k$}
        \STATE Return 0
        \ENDIF
        \STATE Set $\Psi^h = 0$
        \FOR {$r^h$ in $R^h$}
           \STATE $\psi^h = $ Decrypt($r^h$)
           \STATE Update $\Psi^h = \Psi^h + \psi^h$
        \ENDFOR
       \STATE Perturb the output $\tilde\Psi^h = \Psi^h + Laplace(0, \Delta_y / \epsilon)$
       \STATE Return $\tilde\Psi^h$
    \end{algorithmic}
\end{algorithm}

Here, the browser implements the sharing, while the the helpers perform the (partial) recovery of the summation of secret shares.  That is, the browser aims to decompose $y$ into two secret shares such that $y = \psi^0 + \psi^1$.  To do this, (e.g., the {\em map} operation) the browser implements Algorithm \ref{alg:secretshare}, splitting the shares such that each helper effectively receives a random value from a field $\mathcal R$.

For the corresponding {\em Reduce} operation, each helper receives a batch of records from the ad server and decrypts them.  It then executes Algorithm \ref{alg:summation_recovery}, where it sums the list of shares it receives.  The helpers additionally add noise, or apply $k$-anonymity, and return their partial sums.  For the final recovery, the advertising server sums the values returned from the helpers, e.g., $\tilde\Psi = \sum_h \tilde\Psi^h$.  Note the recovery algorithm is (intentionally) imperfect, due to the additive noise and the thresholds required.

\subsubsection{Query and Group By}
Often, an advertising service will wish to break apart the conversion by the corresponding campaign, or location, where the conversion happened.  In such a scenario, we allow the browser to pass additional side information to the helper services, to allow for filtering requests by the advertising service (e.g., what campaigns in Seattle have the most conversions).

The contextual information is exposed only to the helper services.  The helpers then repeatedly apply Algorithm \ref{alg:summation_recovery} on subsets of data matching the query, preserving $k$-anonymity and differential privacy with the conversion summations returned to the ad server.

We note that keys are revealed to the helper service, although the primary goal of preventing cross-site tracking is maintained due to the obfuscation of the corresponding label.  Thus, Masked LARk assumes an \textit{honest-but-curious} security model for the helpers.  In addition, Masked LARk charges the browsers to additionally insert fake records to cast doubt on the legitimacy of any individual record.

\section{Gradient Computation}\label{sec:gradient_computation}
Gradient computation is widely used in machine learning model training, and this is a new feature that is supported by the Masked LARk. We begin by introducing some extra notation.

\subsection{Notation}
Let $\mathbf X \subset \mathcal X$ indicate a set of user features drawn from the space of possible features, and $\mathbf Y \subset \mathcal Y$ indicate the set of user labels drawn from the space of possible labels.  Additionally $\mathcal X \subset \mathbb N^{d_x}$, $\mathcal Y \subset \mathbb N^{d_y}$ and both are bounded.  Let $\mathcal S = \mathcal X \times \mathcal Y$ indicate the sample space.  Individual samples for a user $u_i$ are denoted as $\mathbf s_i := (\mathbf x_i, \mathbf y_i) \in \mathcal S$. Let $\Theta$ indicate a parameter space (i.e., a model), with $\theta \in \Theta$ indicating a particularly chosen representation. 

Let $g$ indicate a bi-linear \emph{aggregation} function over a set of samples after applying a mapping $f: \mathcal X \times \mathcal Y \times \Theta \rightarrow \mathbb R^d$.  One such aggregation is summation:

$$
g\left(\left(f(\mathbf s_1, \theta), ..., f(\mathbf s_n, \theta)\right)\right) := \sum_{i \in n}f(\mathbf s_i, \theta).
$$

Our discussion and examples center around summation in particular, although the results generalize to other bi-linear functions.

\subsubsection{Real and Fake Labels}
A user (or a browser) is the final location where labels and features are combined to form a complete picture of the sample.  Once leaving the browser for storage at an ad server (and subsequent helpers), no single party should be able to reconstruct the complete \emph{real} sample $\mathbf s_i$.  Let $\mathbf \ynot_i \in \mathcal Y \backslash \{\mathbf y_i\}$ indicate a {\em fake} label that is additionally sent from the browser, forming a complete package of the form $(\mathbf x_i, \mathbf y_i, \mathbf \ynot_i)$.

There are additional constraints on the fake labels that must be considered.  

\begin{itemize}
\setlength\itemsep{-.5em}
    \item The labels are assigned by the browser based on whether a conversion was triggered: if no conversion is triggered, the default value will be assigned by the browser.  The default label, for labels $y \in \mathbf y$, is defined to be $0$.  Thus, one of the values $y_i, \ynot_i$ must be 0. 
    \item Conversely, there's no constraint that a real value {\em must} be sent; at times, the browser is required to send cases with strictly fake labels.
    \item We assume the ordering of real versus fake labels presented to the helpers is random, although the corresponding masks must match the label ordering
\end{itemize}

We generally assume non-zero fake labels are assigned by random draws of $\mathcal Y \backslash \{y\} $.  For real-valued labels this assumption can be too revealing.  To handle these types of labels, we employ probabilistic quantization to privatize the values.  First, we quantize a space $0\leq \Delta_{max}$ into $k$ evenly spaced buckets, with boundaries $0, q_1, ..., q_k, \Delta_{max}$. Let $y$ be the value to quantize, which must lie between some $q_i \leq y \leq q_{i+1}$.  We assign a quantized value $\tilde y$ like so:

\[
    \tilde y = 
\begin{cases}
    q_i     &   \text{with probability } (q_{i+1} - y) / (q_{i+1} - q_i)\\
    q_{i+1} & \text{otherwise}
\end{cases}
\]

Hence, $\tilde y = \mathbb{E}[y]$, however, the true value is privatized from downstream systems.  If $\tilde y$ is not an integer, we repeat the probabilistic thresholding to force $\tilde y$ to an integer.

\subsubsection{Masks}
Let $\alpha^h_i$ indicate a mask associated with a \emph{real} label $\mathbf y_i$, while $\alphanot^h_i$ indicates masks associated with a \emph{fake} label $\mathbf \ynot_i$. Similarly, $h$ indicates the helper utilizing the mask.  Without loss of generality, we assume only one real sample and one fake sample, as well as two helpers.  We define the following equalities:

$$
\alpha^0_i + \alpha^1_i = 1
$$
$$
\alphanot^0_i + \alphanot^1_i = 0
$$

Define a ring $\mathcal R := \mathbb Z / m \mathbb Z$, that is, a ring over a modulus of the discrete integers.  Simple examples are \code{int8} or \code{int16}.
We assume $\alpha^0_i, \alphanot^0_i \in \mathcal R$, and each are uniformly at random from the discrete uniform distribution over $\mathcal R$.  Combining the above, we have $\alpha^1_i = 1 - \alpha^0_i$ and $\alphanot^1_i = - \alphanot^0_i$.

\subsection{Masking Multiplication}
Consider a single data point received by a helper $h$, i.e., $(\mathbf x_i, \mathbf y_i, \mathbf \ynot_i, \alpha^h_i, \alphanot^h_i)$, along with a parameterization $\theta$ and a mapping $f$ to compute.  The corresponding other helper receives a matching set, with the same features and labels, but their own mask values.  The math below illustrates utilizing a distinction of real and fake values, however, we stress the helper is not aware of this distinction, and whether a value is real or fake is obfuscated from the helpers.

Each helper begins by computing the functions $f(\mathbf x_i, \mathbf y_i, \theta)$ and $f(\mathbf x_i, \mathbf \ynot_i, \theta)$, with the same results assumed computed by both helpers (for probabilistic functions, a shared seed must be used between the helpers).  Each helper next computes:
$$
\alpha^h_i \cdot f(\mathbf x_i, \mathbf y_i, \theta) + \alphanot^h_i \cdot f(\mathbf x_i, \mathbf \ynot_i, \theta)
$$
When combined, the resulting summation across the helpers becomes
\begin{equation}
\begin{split}
    \sum_h \left[\alpha^h_i \cdot \right. & \left.f(\mathbf x_i, \mathbf y_i, \theta) + \alphanot^h_i \cdot f(\mathbf x_i, \mathbf \ynot_i, \theta)\right] \\
    & = f(\mathbf x_i, \mathbf y_i, \theta) \sum_h \alpha^h_i  + f(\mathbf x_i, \mathbf \ynot_i, \theta) \sum_h \alphanot^h_i  \\
    & = f(\mathbf x_i, \mathbf y_i, \theta) \cdot 1  + f(\mathbf x_i, \mathbf \ynot_i, \theta) \cdot 0 \\
    & = f(\mathbf x_i, \mathbf y_i, \theta).
\end{split}
\label{eq:masks}
\end{equation}

We illustrate this in Figure \ref{fig:maskedFunctions}.  Equation \ref{eq:masks} then can be applied to multiple samples.  Our aim is to sum over the function values applied to the true samples.  Working backwards, we find

\begin{equation}
    \begin{split}
        \sum_i & f(\mathbf x_i, \mathbf y_i, \theta) = \\
        & = \sum_i \left[f(\mathbf x_i, \mathbf y_i, \theta) \cdot 1  + f(\mathbf x_i, \mathbf \ynot_i, \theta) \cdot 0\right] \\
        & = \sum_i \left[f(\mathbf x_i, \mathbf y_i, \theta) \sum_h \alpha^h_i  + f(\mathbf x_i, \mathbf \ynot_i, \theta) \sum_h \alphanot^h_i\right] \\
        & = \sum_i \left[\sum_h \alpha^h_i \cdot f(\mathbf x_i, \mathbf y_i, \theta) + \sum_h \alphanot^h_i \cdot f(\mathbf x_i, \mathbf \ynot_i, \theta)\right] \\
        & = \sum_h \left[\sum_i \alpha^h_i \cdot f(\mathbf x_i, \mathbf y_i, \theta) + \sum_i \alphanot^h_i \cdot f(\mathbf x_i, \mathbf \ynot_i, \theta)\right].
    \end{split}
\end{equation}

Thus, each helper can independently compute their sums and return the final summation.

The above has a more general formulation as well:

\begin{lemma}\label{lem:product}
Let $V$ be a $\mathcal R$-module equipped with a bilinear form $\langle \cdot, \cdot \rangle$.  There is a simple secret sharing protocol to compute $\langle v, w\rangle$ for $v, w \in V$ with three parties.
\end{lemma}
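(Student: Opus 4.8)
The three parties are a \emph{dealer} $D$ (the browser, the only party holding both $v$ and $w$) together with two \emph{evaluators} $E_0,E_1$ (the helpers), and the point is that no communication between $E_0$ and $E_1$ is required. The plan is to additively secret-share each argument and then spread the four terms of the bilinear expansion across the three parties. First I would have $D$ sample $v_0 \sim \mathrm{Unif}(V)$ and $w_0 \sim \mathrm{Unif}(V)$, set $v_1 := v - v_0$ and $w_1 := w - w_0$, and send the pair $(v_b, w_b)$ to $E_b$ for $b \in \{0,1\}$. Since the form is biadditive, $\langle v, w\rangle = \langle v_0, w_0\rangle + \langle v_0, w_1\rangle + \langle v_1, w_0\rangle + \langle v_1, w_1\rangle$. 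I would assign the ``diagonal'' term $\langle v_b, w_b\rangle$ to $E_b$ (which it can form from its own share) and the two ``cross'' terms $\langle v_0, w_1\rangle + \langle v_1, w_0\rangle$ to $D$ (which it can form, having generated all four shares). Each party outputs its term, and whoever performs the final reduction adds the three outputs to recover $\langle v, w\rangle$; correctness is then exactly the displayed expansion, so this step is routine.

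The privacy claim is the part to argue with care. The view of $E_b$ is the single pair $(v_b, w_b)$: for $b=0$ these are drawn uniformly and independently of $(v,w)$, and for $b=1$ they are translates of uniform elements, hence again uniform on $V\times V$ and statistically independent of $(v,w)$ -- the usual one-time-pad argument. This is where I would need $V$ to be a finite $\mathcal R$-module so that ``add a uniform element'' is a measure-preserving bijection; that covers the cases of interest, where $\mathcal R = \mathbb Z/m\mathbb Z$ and $V$ is free, e.g.\ vectors of \code{int8} or \code{int16}. The honest-but-curious dealer learns $v$ and $w$, but it is the data owner, which is consistent with the security model used throughout the paper.

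Finally I would connect the construction back to Section~\ref{sec:gradient_computation}: the masking-multiplication identity (\ref{eq:masks}) is precisely this protocol in the special case where $v = \big(f(\mathbf x_i,\mathbf y_i,\theta),\, f(\mathbf x_i,\ynot_i,\theta)\big)$ is common knowledge to both helpers -- so $v$ need not be shared and $D$'s cross term vanishes -- and only the selector $w = (1,0)$ is shared, as $w_b = (\alpha^b_i, \alphanot^b_i)$, giving each helper's output as $\langle v, w_b\rangle = \alpha^b_i f(\mathbf x_i,\mathbf y_i,\theta) + \alphanot^b_i f(\mathbf x_i,\ynot_i,\theta)$. The one genuinely load-bearing idea -- and the obstacle that rules out a naive protocol in which two parties each supply one argument -- is that the cross terms $\langle v_0, w_1\rangle$ and $\langle v_1, w_0\rangle$ mix shares held by different evaluators, so they cannot be computed without either interaction between $E_0$ and $E_1$ or a party that already sees both; handing that job to the dealer, who knows $v$ and $w$ anyway, is exactly what keeps the helper side interaction-free, matching the paper's ``no overhead between the helpers'' goal. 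I would also note that the three individual outputs are not themselves uniform, so, as elsewhere in the paper, the value produced here is meant to be composed with the $k$-anonymity and differential-privacy mechanisms before anything is released.
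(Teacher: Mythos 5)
Your protocol is correct and does establish the lemma as literally stated, but it is a genuinely different construction from the paper's. The paper's proof shares only $w$: the party $C$ holding $w$ additively splits it as $\alpha^0_j + \alpha^1_j = w_j$, both helpers hold $v$ in the clear, each helper outputs $\langle v, \alpha^h\rangle$, and linearity in the second argument gives $\Psi^0 + \Psi^1 = \langle v, w\rangle$ --- the third party is purely a dealer and contributes no term to the sum. You instead share \emph{both} arguments, expand $\langle v,w\rangle$ into the four biadditive terms, give each evaluator its diagonal term, and have the dealer supply the two cross terms. What your version buys is that $v$ is also hidden from the evaluators; what it costs is that the cross terms can only be formed by a party who holds shares of both arguments from both evaluators, i.e.\ a party who already knows $v$ and $w$ and could compute $\langle v,w\rangle$ alone --- and, more importantly, it does not fit the input configuration the paper actually intends. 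The corollary makes that configuration explicit ($w$ held by one party, $v$ by the two others), and in the training application $v_j = f(\mathbf x_i, \mathbf y_{ij}, \theta)$ depends on the current model $\theta$, so the browser cannot evaluate $v$ when it deposits its shares and cannot be online to supply cross terms at every gradient step. Your closing observation --- that when $v$ is common knowledge to the helpers the $v$-sharing collapses and only the selector $w = (1,0)$ is shared as $(\alpha^b_i, \alphanot^b_i)$, recovering Equation~(\ref{eq:masks}) --- is exactly the paper's proof; so your general construction is a valid but heavier superset, and the degenerate case you mention in passing is the one the paper uses and the one that actually works in the deployment model. Your explicit one-time-pad privacy argument and the finiteness caveat on $V$ are welcome additions that the paper's proof omits entirely.
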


\begin{proof}
Suppose we have three parties, $H0$, $H1$ and $C$.  Let $w$ be held by $C$, and $H0, H1$ both hold $v$.

To compute $\langle v, w\rangle$, $C$ does an additive secret share protocol with $H_i$.  That is, $\alpha_j^0, \alpha_j^1 \in \mathcal R$, $\alpha_j^0 + \alpha_j^1 = w_j$.  $H_0$ computes $\Psi^0 = \sum_j \alpha_j^0$, $H_1$ computes $\Psi^1 = \sum_j \alpha_j^1$. Then $\langle v, w\rangle = \Psi^0 + \Psi^1$ as $\mathcal R$ has a bi-linear form.
\end{proof}

\begin{corollary}
Let $V$ be a $\mathcal R$-module equipped with a bilinear form $\langle \cdot, \cdot \rangle$.  Let $v \in V$, and $s \subseteq v$.
There is a simple secret sharing protocol to compute $\sum_{v_j\in s} v_j$ where $s$ is held by one party and $v$ by two other parties.
\end{corollary}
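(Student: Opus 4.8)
The plan is to reduce the corollary to Lemma~\ref{lem:product} by taking $w$ to be the indicator of the subset $s$. Identify $V$ with a coordinate $\mathcal R$-module so that $\langle v, w\rangle = \sum_j v_j w_j$, and let the party that holds $s$ form the vector $w \in V$ with $w_j = 1$ whenever the coordinate $v_j$ lies in $s$ and $w_j = 0$ otherwise. Then $\langle v, w\rangle = \sum_{v_j \in s} v_j$, which is exactly the quantity to be computed, so it suffices to run the three-party protocol of Lemma~\ref{lem:product} on this pair $(v, w)$.

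First I would note that the party holding $s$ can build $w$ from its own data alone: knowing which coordinates belong to $s$ is precisely knowing the support of $w$, and all nonzero entries are the identity $1 \in \mathcal R$. This party plays the role of $C$ in Lemma~\ref{lem:product}, while the two parties holding $v$ play the roles of $H0$ and $H1$. Then I would invoke the lemma verbatim: $C$ secret-shares each coordinate as $\alpha_j^0 + \alpha_j^1 = w_j$ with $\alpha_j^0 \sim Unif(\mathcal R)$, sending $\alpha_j^0$ to $H0$ and $\alpha_j^1$ to $H1$; each $H_i$ returns $\Psi^i = \sum_j \alpha_j^i v_j$; and by bilinearity $\Psi^0 + \Psi^1 = \sum_j w_j v_j = \sum_{v_j \in s} v_j$, with no interaction between $H0$ and $H1$.

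The only point that needs care — and hence the main (mild) obstacle — is pinning down what ``$s \subseteq v$'' means: one should state explicitly that $s$ is a sub(multi)set of the coordinates of $v$, so that $w$ is a well-defined $\{0,1\}$-valued (or multiplicity-valued) element of $V$ and the ambient bilinear form restricts correctly, letting Lemma~\ref{lem:product} apply unchanged. I would also append a one-sentence remark that $H0$ and $H1$ each see only a uniformly random additive share of $w$ and therefore learn nothing about $s$ beyond its length, keeping the construction consistent with the honest-but-curious model used throughout the paper.
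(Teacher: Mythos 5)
Your proposal matches the paper's proof exactly: the paper also takes $w$ to be the characteristic (indicator) vector of $s$ and applies Lemma~\ref{lem:product} with the $s$-holder playing the role of $C$. Your additional remarks on well-definedness and on what the helpers learn are sound elaborations of the same one-line argument.
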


\begin{proof}
Apply Lemma \ref{lem:product} with $w := \mathcal X (s)$.
\end{proof}

\begin{figure}[t]
\begin{center}
\centerline{\includegraphics[width=\columnwidth]{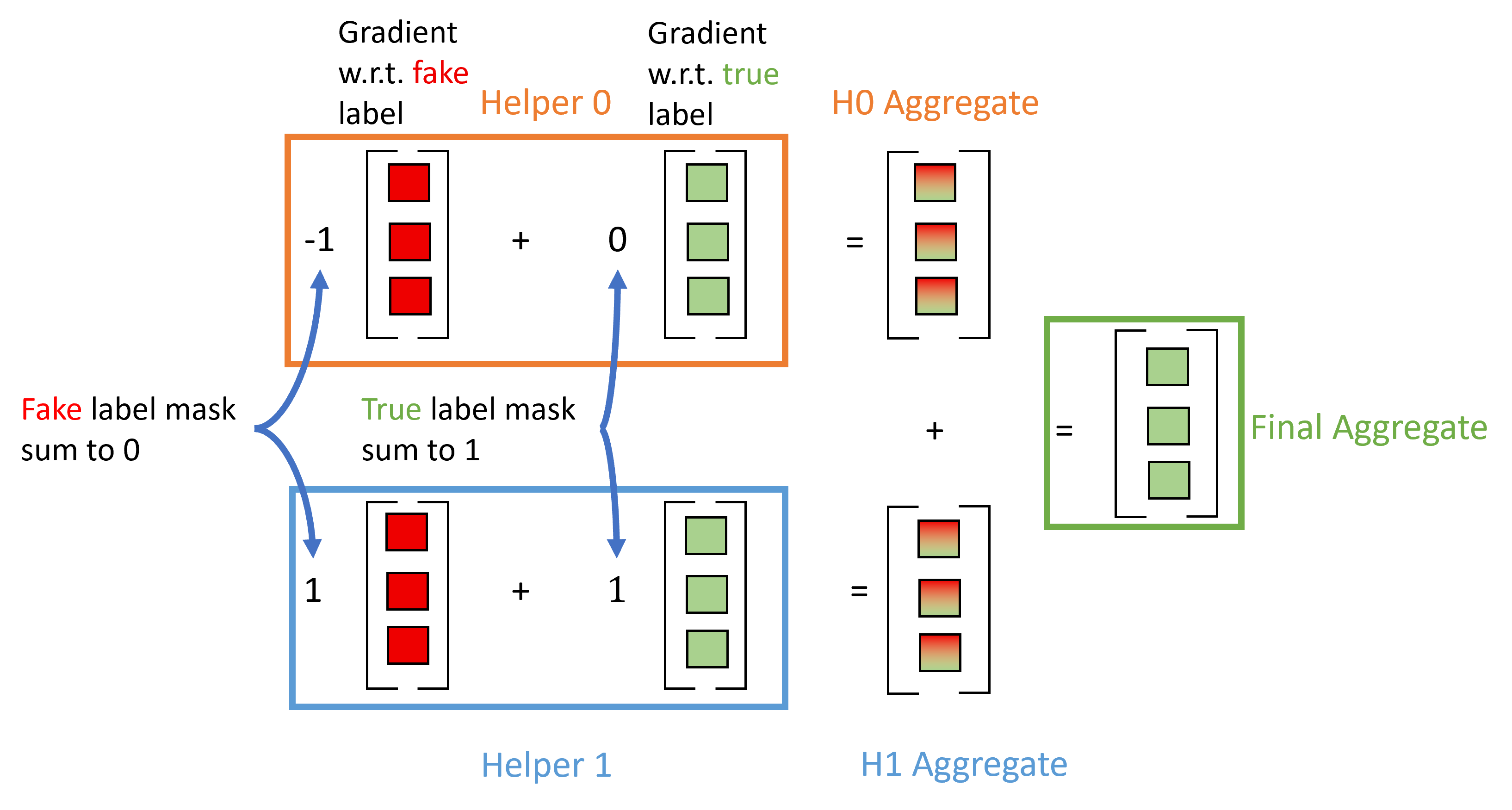}}
\caption{This diagram illustrates masking.  There are multiple labels (fake and real).  If summed vertically, the red simply cancels, and green remains.  However, as multiplication is distributed, we can instead have H0 and H1 sum its vectors and return to the advertising server.  When results are combined, only the true vector remains.}
\label{fig:maskedFunctions}
\end{center}
\end{figure}

\subsubsection{Masked Gradient Model Training}
With the general result above, we can apply a variety of functions $f$ to gather aggregate statistics over the sample space.  Possible $f$ mappings that are desired to compute include summation ($f(\mathbf x_i, \mathbf y_i, \theta) = \mathbf y_i$) and counting ($f(\mathbf x_i, \mathbf y_i, \theta) = 1$).  Each of these is handled with simpler mechanisms in subsection \ref{subsec:summations}.  Another interesting case, however, is model training.

For many machine learning paradigms, including supervised learning, we require a label $y_i$ and a feature vector $\mathbf x_i$.  The end goal is to minimize, in some way, the following \emph{loss function}:

$$
\argmin_\theta \sum_i L(F(\mathbf x_i), \mathbf y_i, \theta),
$$

where $L$ is some loss function and $F$ is some function from $\mathcal{X} \rightarrow \mathcal{Y}$, where $\mathbf{x}_i \in \mathcal{X}$ and $\mathbf{y}_i \in \mathcal{Y}$. Let's assume we have a differentiable model (e.g., a neural network) for $F$, parameterized by $\theta$.
A majority of neural network optimization follows some variant of Stochastic Gradient Descent \cite{bottou2010sgd}, using the following iterative approach to optimization:

$$
\theta_{j+1} = \theta_j - \eta \nabla F(x_i, y_i, \theta_j),
$$

where $\eta$ is a hyperparameter known as the learning rate.  In practice, we aggregate a number of data points at once using {\em minibatches} of data:

$$
\theta_{j+1} = \theta_j - \eta \sum_i \nabla F(x_i, y_i, \theta_j)
$$

Thus, we let $f := \nabla F$ in our above notation, meaning the {\em helpers} can compute masked aggregate gradients and send them back to the advertising server, and repeat.  There are a variety of per-coordinate learning rates, such as ADAM \cite{kingma2017adam}, AdaGrad \cite{duchi2011adagrad}, or AdaBound \cite{luo2019adabound}. These methods still rely on previously computed gradients, and utilizing MPC to compute the gradients works within these optimization methods as well.

\subsubsection{Global Gradient Noise}
The browser actions mirror the discussion above; each browser must create multiple labels and mask them before passing them to the helpers.  The privacy constraints, however, must be enforced by the helpers.  For this, we adapt an existing work \cite{song2013dpsgd} to compute gradients with (globally) differentially private updates. 

As discussed in Section \ref{sec:masked_lark}, we require both a $k$-anonymity constraint and differential privacy budget declared in a public parameter server. When a helper receives a batch of data and model to compute gradients over, it follows the algorithm described in Algorithm \ref{alg:helpergradient}.

\begin{algorithm}[t]
    \caption{Helper Computation}\label{alg:helpergradient}
    \begin{algorithmic}
        \STATE \textbf{Inputs}: gradient bound $\psi$, privacy parameter $\epsilon$, $k$ for $k$-anonymity
		\IF{minibatch size $<k$ }
		\STATE Return zero vector
		\ENDIF
		\FOR{each sample $(x_i, y_i, \alpha^h_i)$ in the minibatch }
		\STATE Compute $G(x_i, y_i) = \nabla L(F(\mathbf x_i), \mathbf y_i, \theta)$
		\STATE $\hat{G}(x_i, y_i) = \alpha^h_i \cdot G(x_i, y_i) / \max(\psi, |G(x_i, y_i)|)$
		\ENDFOR
		\STATE $\tilde{G}(x_i, y_i) = \hat{G}(x_i, y_i) + \eta$, where $\eta \sim Laplace(\frac{\psi}{\epsilon})$.
		\STATE Return $\sum_i \tilde{G}(x_i, y_i)$
    \end{algorithmic}
\end{algorithm}

We record the following theorem. 

\begin{theorem}[Theorem 3.6 \cite{dwork2014algorithmic}]
Let $\Delta$ denote the sensitivity of a function $f$. The Laplace mechanism with scale $\Delta/\epsilon$ preserves $(\epsilon, 0)$-differential privacy.
\end{theorem}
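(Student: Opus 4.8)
The plan is to follow the standard argument for the Laplace mechanism. First I would fix notation: write $f : \mathcal D \to \mathbb R^k$ for the function in question, let $\Delta = \max_{x \sim x'} \|f(x) - f(x')\|_1$ denote its ($\ell_1$-)sensitivity over adjacent databases $x \sim x'$, and let $M(x) = f(x) + (Y_1, \dots, Y_k)$ where the $Y_i$ are i.i.d.\ draws from $\mathrm{Laplace}(\Delta/\epsilon)$, whose density is $t \mapsto \frac{\epsilon}{2\Delta}\exp(-\epsilon |t|/\Delta)$. The goal is to show that for every pair of adjacent inputs $x \sim x'$ and every (measurable) $S \subseteq \mathbb R^k$ we have $\Pr[M(x) \in S] \le e^\epsilon \Pr[M(x') \in S]$, which is $(\epsilon,0)$-differential privacy.

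Next I would compare the output densities pointwise. Writing $p_x$ and $p_{x'}$ for the densities of $M(x)$ and $M(x')$, at any point $z \in \mathbb R^k$ the ratio factorizes over coordinates because the noise is coordinatewise independent, so $\frac{p_x(z)}{p_{x'}(z)} = \prod_{i=1}^k \exp\!\big(\frac{\epsilon}{\Delta}(|f(x')_i - z_i| - |f(x)_i - z_i|)\big)$. Using the triangle inequality coordinatewise, $|f(x')_i - z_i| - |f(x)_i - z_i| \le |f(x)_i - f(x')_i|$, so the product is at most $\exp\!\big(\frac{\epsilon}{\Delta}\|f(x) - f(x')\|_1\big) \le e^\epsilon$, where the last step is exactly the definition of $\Delta$ as the sensitivity.

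Then I would integrate this pointwise bound over $S$: $\Pr[M(x) \in S] = \int_S p_x(z)\,dz \le e^\epsilon \int_S p_{x'}(z)\,dz = e^\epsilon \Pr[M(x') \in S]$. Since the roles of $x$ and $x'$ are symmetric and the additive slack in the definition of $(\epsilon,\delta)$-DP is taken to be $\delta = 0$, this establishes $(\epsilon,0)$-differential privacy.

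There is no genuine obstacle here: the statement is a verbatim restatement of Theorem~3.6 of \cite{dwork2014algorithmic}, and the only step needing care is the triangle-inequality bound followed by summing over coordinates to recover the $\ell_1$ norm; the factorization over coordinates and the integration are bookkeeping. If one wished to state the mechanism in full generality one would also remark on measurability of $S$, but for the finite-dimensional real-valued output considered here that is automatic.
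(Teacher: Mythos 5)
Your proof is correct and is exactly the standard argument for the Laplace mechanism (pointwise density ratio, reverse triangle inequality coordinatewise, sum to recover the $\ell_1$ sensitivity, integrate over $S$). The paper itself gives no proof --- it merely records the theorem as a citation to Theorem~3.6 of the Dwork--Roth book --- and your argument is precisely the proof given in that cited source, so there is nothing to reconcile.
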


This ensures that the helper output of Algorithm \ref{alg:helpergradient} is differentially private with the privacy budget $\epsilon$.  We note other work \cite{abadi2016deep} relaxes the privacy guarantee to $(\epsilon, \delta)$-differential privacy by adding Gaussian noise instead of Laplacian noise, which could be explored in future work.

\subsubsection{Local Differential Privacy}
\label{subsubsec:localdp}
An alternative approach would be to enforce local differential privacy by adding noise to the data being sent to the helpers 
\cite{kasiviswanathan2011can}.  More precisely, given a feature vector $\mathbf x_i$, where each coordinate is bounded by some $\Delta$-sensitivity and there is a publicly available privacy parameter $\epsilon$, let each browser pass the following to the helpers\footnote{As a technical note, in practice we restrict feature values to a byte representation for space considerations, meaning the noise must be clipped at 0 and 255}:
\begin{equation}
    \tilde x_i = x_i + Laplace(0, \Delta / \epsilon)
\end{equation}

There are a number of trade-offs here in comparison to utilizing global differential privacy:
\begin{itemize}
\setlength\itemsep{0em}
    \item The feature vectors presented to the helpers are $\epsilon$-differentially private.
    \item So long as the collection mechanism is $\epsilon$-differentially private, then calculating gradients on the collected records is also $\epsilon$-differentially private \cite{dwork2006calibrating}.
    \item The helpers no longer need to clip gradients per sample (expensive computation) or add global noise to the sums.
    \item Local differential privacy is generally noisier than global differential privacy, with less signal available for learning. We explore the costs of this approach in Section \ref{sec:localprivacyexp}.
\end{itemize}

We pause to highlight that an ambitious adversary that is able to leverage both an advertising server and one of the helper services could work to recover a mapping between the locally differentially private vector and the true label by manipulating the input network and returned gradient.  We believe the effectiveness of this attack is limited, because of our k-anonymity constraint and the assumed limited cardinality of the label space.

\section{Experiments}\label{sec:experiments}

For demonstration purposes, we have implemented MaskedLARk as a training service for neural networks. In this section, we outline practical considerations that need to be made during practical implementation of MaskedLARk, as well as study the effect of a number of privacy-related parameters on model performance.

\begin{figure}[ht]
    \centering
    \subfigure[WBCD]{\label{fig:helperdp_bcd} \includegraphics[width=0.8\linewidth]{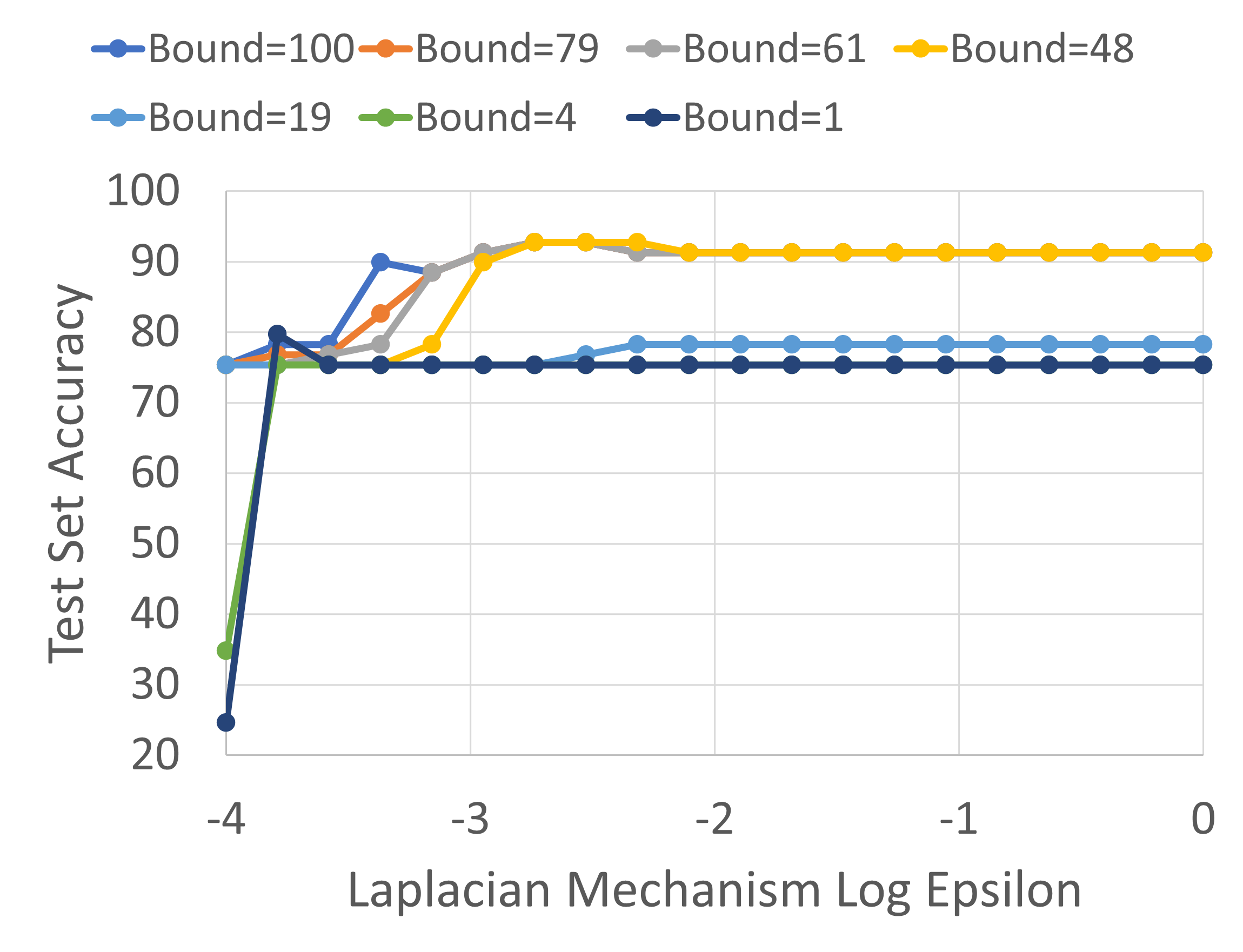}}\\
    \subfigure[MNIST]{\label{fig:helperdp_mnist} \includegraphics[width=0.8\linewidth]{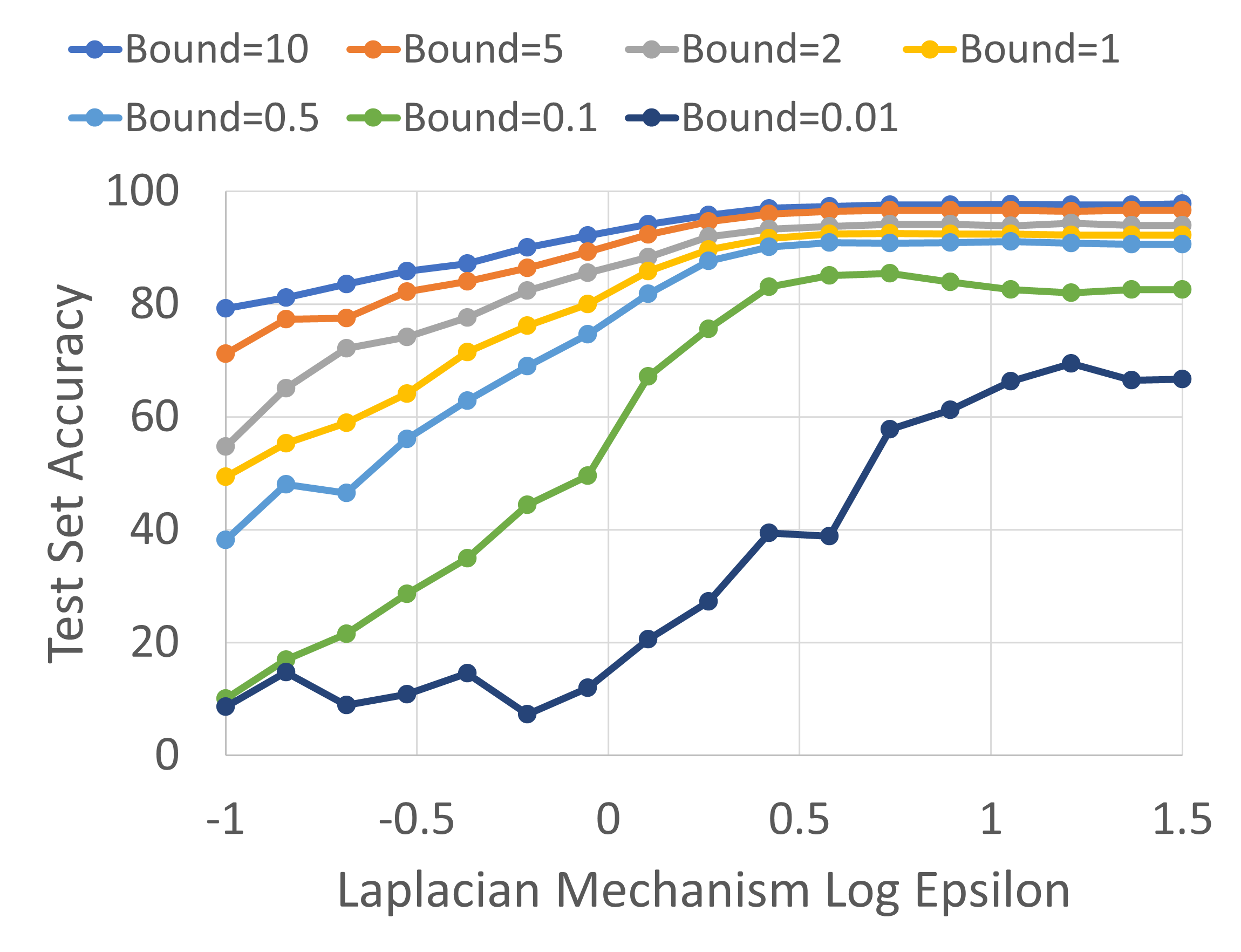}}~
    \caption{Both the gradient bound and Laplace noise power are important variables to set for privacy and performance purposes. Here we plot the effect of decreasing the magnitude of noise added to aggregated gradients on inference accuracy for a variety of gradient norm settings.}
    \label{fig:helperdp}
\end{figure}

\subsection{Setting}

We have implemented our demo MaskedLARk as a three-party system, in which a central node simulates the actions of both a browser and ad server, sending requests to two helper services which cannot communicate with each other. Here we focus on utilizing the helper services for gradient computation for the purposes of training a feedforward neural network. 

\subsubsection{Implementation Details}

We demonstrate MaskedLARk using two datasets: the MNIST digit classification dataset \cite{lecun1998gradient} and the Wisconsin Breast Cancer Dataset (WBCD) \cite{street1993nuclear}. Both of these datasets are standard classification benchmarks, while not requiring massive networks to achieve reasonable test accuracy. The MNIST dataset has 50000 training examples, input size of 784, and 10 classes, while the WBCD has 500 training examples, input size of 30, and a binary output. Notably, WBCD has a class imbalance, with roughly 75$\%$ of all datapoints being ``0'' instances. MNIST has perfectly balanced labels.

\begin{figure}[t]
    \centering
    \subfigure[WBCD]{\label{fig:requestinfo_wbcd} \includegraphics[width=0.8\columnwidth]{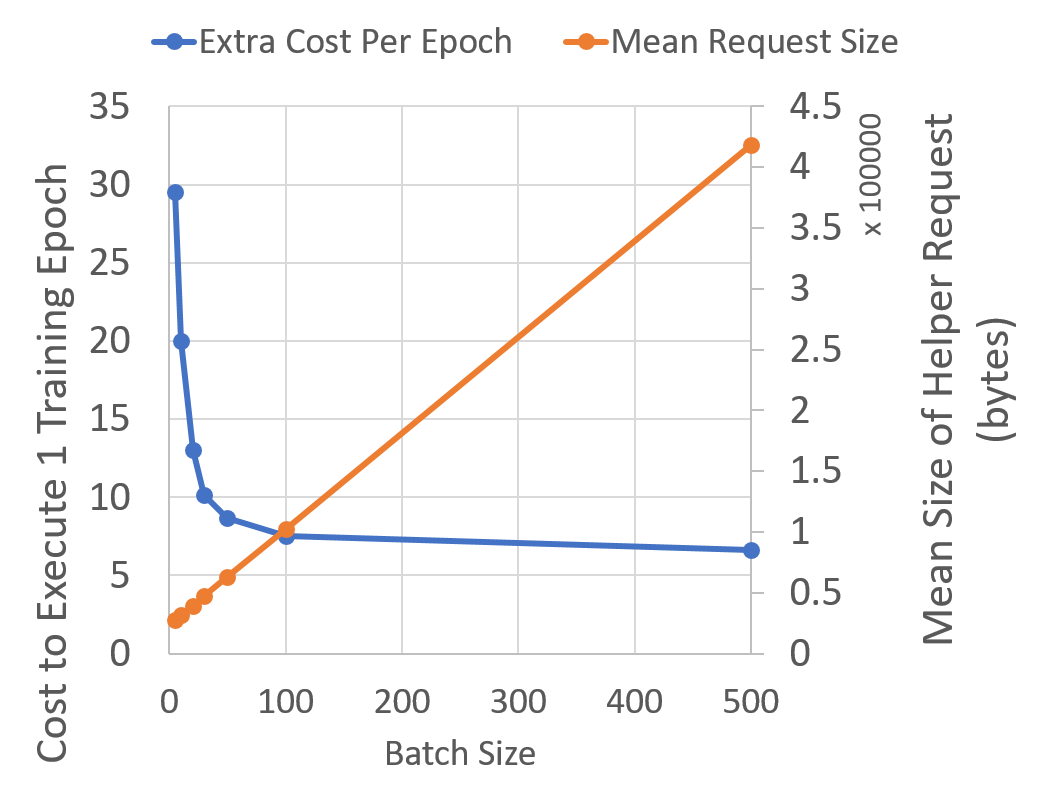}}\\
    \subfigure[MNIST]{\label{fig:requestinfo_mnist} \includegraphics[width=0.8\columnwidth]{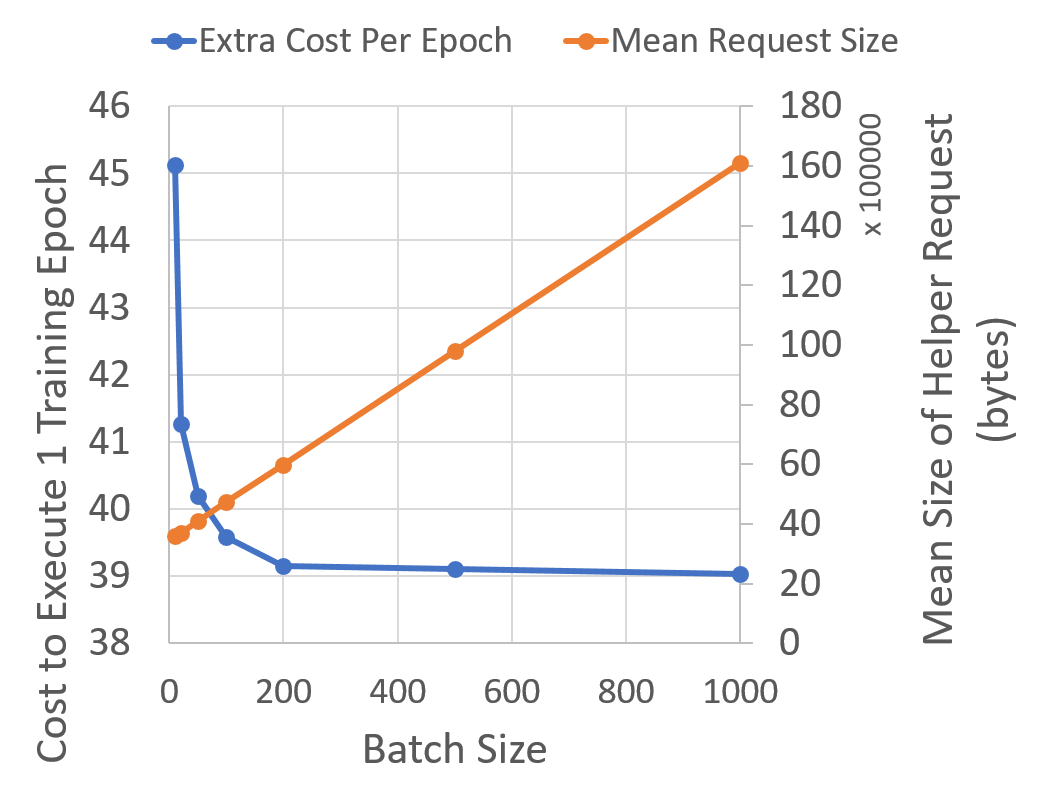}}~
    \caption{Making requests to helper services incurs a latency cost to training. Here we present the multiplicative latency increase from training our network via helper service versus loading all data locally. For small amount of data sent, most requests consist mainly of encoding the network, rather than the data. In general, we recommend using small networks and sending many data points at once, both for $k$-anonymity purposes as well as latency.}
    \label{fig:requestinfo}
\end{figure}

Our networks are simple feedforward networks with ReLU activations; the MNIST network has one hidden layer with 500 nodes, and the WBCD network has two hidden layers with 50 nodes. We note that the MNIST data and network demonstrated here are higher-dimensional than the intended use-case for MaskedLARk.

All gradient calculation is done in PyTorch on the helper and central processor side, and training is done via simple Stochastic Gradient Descent. To communicate between central processor and helper, all networks are encoded as ONNX models \cite{bai2019}. For latency and space reasons all network parameters are compressed to 8-bit values before sending (see Section \ref{sec:latencyexp} for reasoning).

\subsection{Effect of Privacy Mechanisms on Model Performance}\label{sec:helperdp}

As users prepare to deploy differential privacy mechanisms, a central concern is the effect of privacy on downstream model performance. In this Subsection, we explore the role of gradient clipping and the magnitude of Laplacian noise added to the gradients after aggregation. In Figure \ref{fig:helperdp} we plot the trade-off for a range of gradient bounds.

We observe nontrivial effects on model performance when gradient clipping is used, as well as detrimental effects for high levels of Laplacian noise. However, there are wide ranges of parameters at which competitive test-time accuracy can be achieved, so moderate privacy constraints need not require sacrificing useful models.

\begin{figure}[t]
    \centering
    \subfigure[WBCD]{\label{fig:localdp_wbcd} \includegraphics[width=0.8\linewidth]{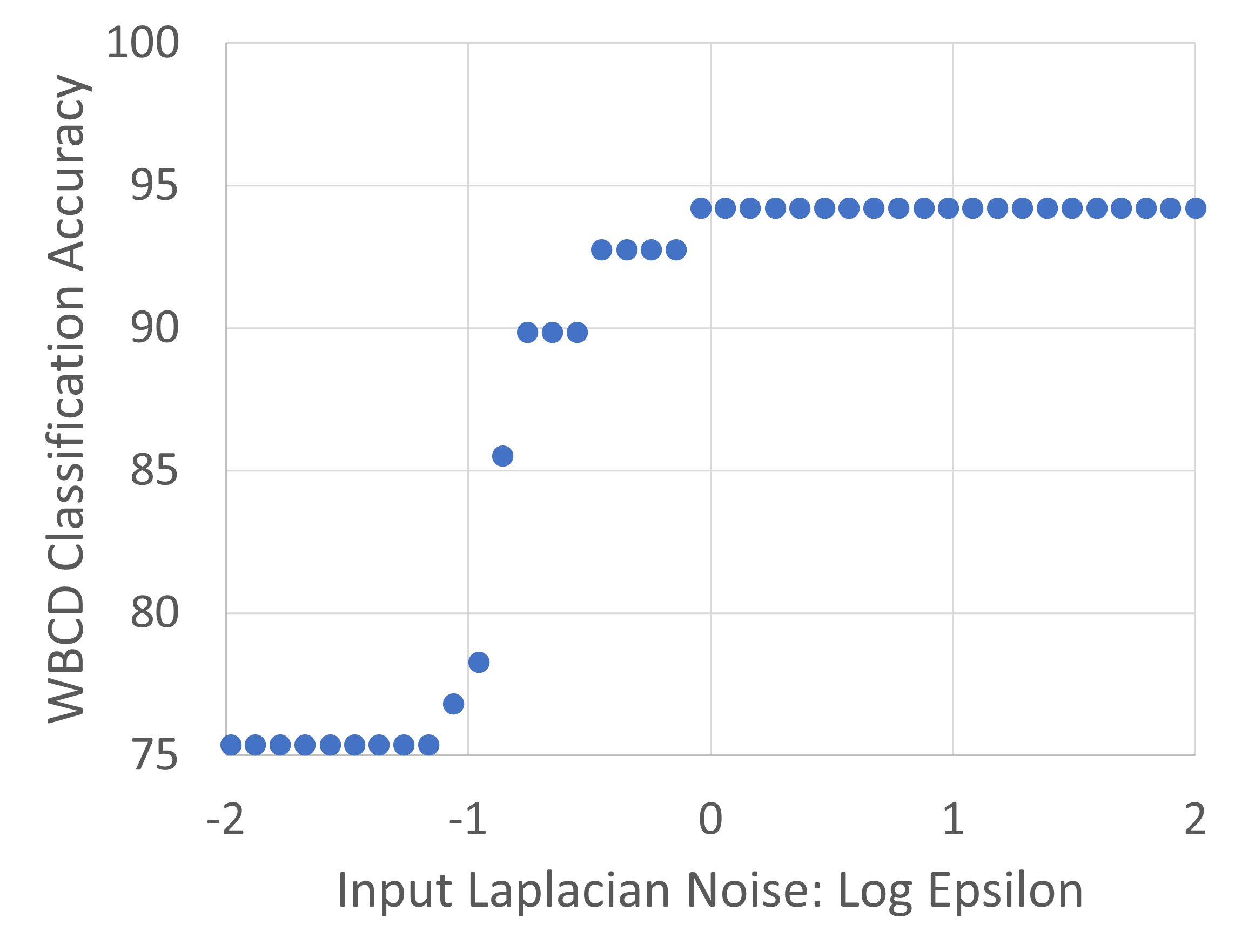}}\\
    \subfigure[MNIST]{\label{fig:localdp_mnist} \includegraphics[width=0.8\linewidth]{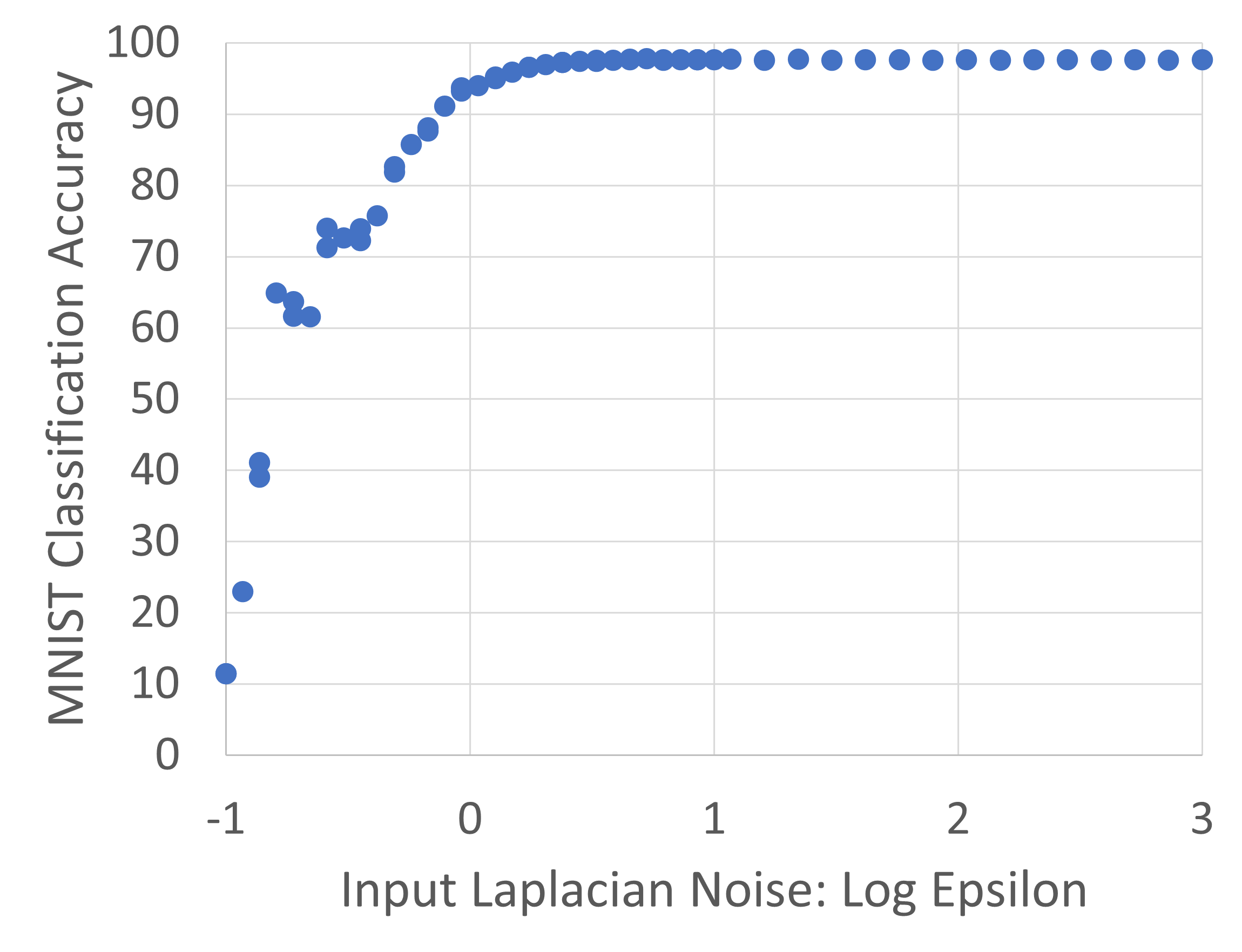}}~
    \caption{Simple neural networks are quite resilient to Laplacian noise injected into input features as part of a local differential privacy mechanism. Above we plot the network test accuracy against the log $\epsilon$ of injected Laplacian noise with variance $1/\epsilon$, so moving to the right in the above plots decreases noise power.}
    \label{fig:localdp}
\end{figure}

\subsection{Latency and Training Time}\label{sec:latencyexp}

Here we explore the latency effects of moving gradient computation to remote helpers. When computation must occur at the helper level, efficient communication is of a primary concern. In Figure \ref{fig:requestinfo}, we explore the effect of changing the number of data points sent to the helpers in a single request. 

We compare the amount of time required to train a network utilizing a single CPU with 64 GB of RAM, capable of storing the entire network and dataset in-memory. We duplicate the algorithm used by the helper, including gradient clipping and noise injection, so that the extra time comes only from packaging requests, parsing requests, and data transmission.

The most obvious takeaway is that sending as much data as possible at the same time is desirable from the perspective of training over the whole dataset. Small batch sizes will often be disallowed by $k$-anonymity rules, and as Figure \ref{fig:requestinfo} suggests, batch sizes that are too small also incur significant latency costs. This phenomenon is a consequence of sending data coupled with the networks that process that data; sending small number of data will result in the network needing to be compressed and parsed more frequently.

However, recent work in training deep networks suggests that batch sizes that are too large may result in poor generalization at inference-time \cite{keskar2017large}. In addition, practical helpers will have to place limits on requests, to prevent slow gradient calculation turnaround when multiple parties are making requests.

The size of the network being sent is central to getting timely results. The network we used for MNIST is intended to be a stress-test, and is far too large for practical use. Since we use the ONNX protobufs as the method to communicate the networks, we recommend \emph{quantization} of network parameters. Quantization of deep networks can range from simple limitations of the bit depth of parameters \cite{courbariaux2014training} to more complex schemes \cite{gong2014compressing, han2015deep}, any of which may be applied here, as long as the helper does not need to modify its protocol.

\subsection{Adding Noise to Data Directly: An Alternative Approach to Privacy} \label{sec:localprivacyexp}

While there are many advantages to the global differential privacy approach to Masked LARk, we observe that training of large networks with massive datasets may incur prohibitive delays due to point-wise application of gradient clipping during gradient calculation. Recently, some techniques have been proposed to improve the speed of this process \cite{lee2021scaling}, but this step remains a major bottleneck.

A reasonable alternative to clipping gradients and adding noise after summation is to instead add differential privacy noise \emph{before} gradient calculation. This scheme is referred to as local Differential Privacy \cite{kasiviswanathan2011can}, so that privacy can be handled browser-side for gradient calculation. So long as the collection mechanism is $\epsilon$-differentially private, then calculating gradients on the collected records is also $\epsilon$-differentially private \cite{dwork2006calibrating}.

We test this system, again leveraging the Laplace mechanism, but applying noise to the network inputs. Concretely, we send the helpers triplets $(x_i+\eta, y_i, \alpha_i)$, where $\eta \sim Laplace(\frac{1}{\epsilon})$. The helpers then return the sum $\sum_i \alpha_i \nabla L(f(x_i + \eta), y_i)$. We plot the performance of the trained networks as a function of $\epsilon$ in Figure \ref{fig:localdp}.

Figure \ref{fig:localdp} can be compared to Figure \ref{fig:helperdp}, since they both compare the effect of privacy settings on trained network performance. However, local differential privacy (in this section) often requires more noise to be injected than global differential privacy (in Section \ref{sec:helperdp}). With that in mind, we observe that the deep networks trained in our experiments are quite resilient to noise at the feature level, even with quite high variances compared to the scale of the features.

\section{Discussion}\label{sec:discussion}
In this paper, we have proposed Masked LARk, a Masked Learning, Aggregation and Reporting worKflow.  Masked LARk has several interesting properties:

\begin{itemize}
    \item It extends the existing proposal in \cite{google2020aggregatereporting} to a wide variety of tasks, outside of simple aggregation reporting, and most notably to enable model training.
    \item We discussed how differentially private \cite{dwork2014algorithmic} aggregated statistics can be computed and sent to the advertising server using browsers and helpers.  This enables user privacy protection, while allowing advertising services to show relevant ads to interested consumers.
    \item We outlined the usage of Masking, and showed how it generalizes to a wide variety of tasks.  Notably, this allows us to compute gradients for differentiable models, and train conversion models using secure MPC.
    \item In Section \ref{subsubsec:localdp}, we showed if we utilize local differential privacy on our feature vectors, the data sent to the helpers is differentially private, and by extension the learning algorithm is also differentially private.
    \item We perform several experiments on benchmark datasets, exploring the tradeoffs between differential privacy and accuracy. We plan to release a Python package with example usage of Masked LARk and a number of utilities for model training.
\end{itemize}

So far, the endpoint helper services are set up and can run aggregation and basic model training.  However, additional work needs to be done:

\begin{itemize}
    \item A public parameter service needs to be set up, where helpers can access publicly declared privacy settings for advertising servers
    \item Data sanitation is largely ignored at this time.  There does exist quite a bit of work in this space \cite{rathee2020cryptflow2}, meaning we need to add additional functionality to Masked LARk to enable it.
    \item Additional proposals for aggregate measurement exist \cite{msr2021bucketization}, and we need community feedback on which version of aggregation should be implemented.  To date, no other proposals exist for model training.
\end{itemize}

To this end, we will continue to work with the advertising community to agree upon standards which guarantee user privacy.
\bibliographystyle{ieeetr}
\bibliography{MaskedLARk}

\end{document}